\documentclass[11pt,american,twoside,a4paper]{article}

\usepackage[T1]{fontenc}
\usepackage[utf8]{inputenc}
\usepackage[american,provide=*]{babel}
\usepackage{a4wide}
\usepackage{fourier,bbm,bm}
\usepackage{BOONDOX-calo}
\usepackage{amsmath,amsfonts,amsthm,amssymb}
\usepackage{color,xcolor,graphicx}
\usepackage{latexsym,mathtools,leftindex}
\usepackage[shortlabels]{enumitem}
\usepackage{xargs,xspace,needspace,xparse,easyReview}
\usepackage[active]{srcltx}
\usepackage[colorlinks=true]{hyperref}
\usepackage{fancyhdr}
\usepackage[toc]{appendix}
\usepackage[normalem]{ulem}
\usepackage[style=cap-alpha]{biblatex}

\colorlet{darkblue}{blue!50!black}
\hypersetup{
	colorlinks,%
	citecolor=darkblue,%
	filecolor=red,%
	linkcolor=darkblue,%
	urlcolor=darkblue,%
	pdfnewwindow=true,%
	pdfstartview={FitH}
}

\numberwithin{equation}{section}
\theoremstyle{plain}
\newtheorem{theorem}{Theorem}[section]
\AtBeginEnvironment{theorem}{\Needspace{5\baselineskip}}
\newcommand\bet{\begin{theorem}} \newcommand\eet{\end{theorem}}
\newtheorem{proposition}[theorem]{Proposition}
\AtBeginEnvironment{proposition}{\Needspace{5\baselineskip}}
\newcommand\bep{\begin{proposition}} \newcommand\eep{\end{proposition}}
\newtheorem{lemma}[theorem]{Lemma}
\AtBeginEnvironment{lemma}{\Needspace{5\baselineskip}}
\newcommand\bel{\begin{lemma}} \newcommand\eel{\end{lemma}}
\newtheorem{corollary}[theorem]{Corollary}
\AtBeginEnvironment{corollary}{\Needspace{5\baselineskip}}
\newcommand\bec{\begin{corollary}} \newcommand\eec{\end{corollary}}
\theoremstyle{definition}
\newtheorem{definition}[theorem]{Definition}
\AtBeginEnvironment{definition}{\Needspace{5\baselineskip}}
\newcommand\bed{\begin{definition}} \newcommand\eed{\end{definition}}
\newtheorem{remark}{Remark}[section]
\AtBeginEnvironment{remark}{\Needspace{5\baselineskip}}
\newcommand\ber{\begin{remark}} \newcommand\eer{\end{remark}}
\newtheorem{remarks}[remark]{Remarks}
\AtBeginEnvironment{remarks}{\Needspace{5\baselineskip}}
\newcommand\bers{\begin{remarks}} \newcommand\eers{\end{remarks}}
\AtBeginEnvironment{quote}{\Needspace{5\baselineskip}}

\newcommand\beq{\begin{equation}} \newcommand\eeq{\end{equation}}

\newcounter{smallarabics}
\newenvironment{arabicenumerate}
{\begin{list}{{\normalfont\textrm{(\arabic{smallarabics})}}}
		{\usecounter{smallarabics}\setlength{\itemindent}{0cm}
			\setlength{\leftmargin}{5ex}\setlength{\labelwidth}{4ex}
			\setlength{\topsep}{0.75\parsep}\setlength{\partopsep}{0ex}
			\setlength{\itemsep}{0ex}}}
	{\end{list}}
\newcommand{\ben}{\begin{arabicenumerate}} \newcommand{\een}{\end{arabicenumerate}}

\newcounter{smallroman}
\newenvironment{romanenumerate}
{\begin{list}{{\normalfont\textbf{(\roman{smallroman})}}}
		{\usecounter{smallroman}\setlength{\itemindent}{0cm}
			\setlength{\leftmargin}{5ex}\setlength{\labelwidth}{4ex}
			\setlength{\topsep}{0.75\parsep}\setlength{\partopsep}{0ex}
			\setlength{\itemsep}{0ex}}}
	{\end{list}}
\newcommand{\benr}{\begin{romanenumerate}} \newcommand{\eenr}{\end{romanenumerate}}

\AtBeginEnvironment{theorem}{\Needspace{5\baselineskip}}
\AtBeginEnvironment{proposition}{\Needspace{3\baselineskip}}
\AtBeginEnvironment{definition}{\Needspace{5\baselineskip}}
\AtBeginEnvironment{corollary}{\Needspace{5\baselineskip}}
\AtBeginEnvironment{lemma}{\Needspace{5\baselineskip}}
\AtBeginEnvironment{quote}{\Needspace{5\baselineskip}}

\NewDocumentCommand{\ASS}{mm}{\expandafter\newcommand\csname #1\endcsname{{\hyperref[#1]{\bf (#2)}}}}
\NewDocumentCommand{\preASS}{mm}{\expandafter\newcommand\csname pre#1\endcsname{{\hyperref[#1]{\bf (#2)}}}}

 \newcommand\cB{{\mathcal B}} 
  \newcommand\cF{{\mathcal F}}

\newcommand\cS{{\mathcal S}}  
 \newcommand\cW{{\mathcal W}}

 \newcommand\ce{{\mathcal e}}

\newcommand\ZZ{{\mathbbm Z}}

\newcommand\RR{{\mathbbm R}}
\newcommand\CC{{\mathbbm C}}

\newcommand\TT{{\mathbbm T}}

\newcommand{\one}{\mathbbm{1}}

\newcommand\fA{{\mathfrak A}}

 \newcommand\fh{{\mathfrak h}}

\newcommand\e{\mathrm{e}}

\renewcommand{\i}{\mathrm{i}}
\renewcommand\d{\mathrm{d}}

\newcommand{\Id}{\mathrm{Id}}

\newcommand\myoperator[1]{\mathop{\mathrm{#1}}}

\newcommand\Dom{\myoperator{Dom}\nolimits}

\renewcommand\Im{\myoperator{Im}\nolimits}
\newcommand\spec{\myoperator{sp}\nolimits}
\newcommand\sgn{\myoperator{sgn}\nolimits}

\renewcommand\det{\myoperator{det}\nolimits}
\newcommand\tr{\myoperator{tr}\nolimits}

\newcommand\supp{\myoperator{supp}\nolimits}

\newcommand{\CAR}{\myoperator{CAR}\nolimits}
\newcommand\Ent{\myoperator{Ent}\nolimits}

\renewcommand\proof{\noindent{\bf Proof.\ }}
\renewcommand\qed{\hfill$\Box$}
\newcommand\ie{\textsl{i.e.,\ }}
\renewcommand\bar{\overline}

\newcommand{\twol}[2]{\genfrac{}{}{0pt}{1}{#1}{#2}}

\newcommand\eq{\mathrm{eq}}

\newcommand\loc{\mathrm{loc}}

\newcommand\Lim{\lim_\Lambda}
\newcommand\fAloc{\fA_\loc}

\newcommand\cSI{\cS_\mathrm{I}}
\newcommand\cSeq{\cS_\mathrm{eq}}

\newcommand\cSKMS{\cS_\mathrm{KMS}}

\newcommand{\un}{\mathbf{1}}
\newcommand{\hdec}{{h_\Lambda^{\mathrm{dec}}}}
\newcommand{\Phidec}{{\Phi_\Lambda^{\mathrm{dec}}}}

\begin{document}
\def\today{}

\title{Entropy Maximization and Weak Gibbsianity of Quasi-Free Fermionic States}

\bigskip
\author{Vojkan Jak\v{s}i\'c$^{1}$\quad Claude-Alain Pillet$^2$\quad Anna Szczepanek$^3$\\ 
\\ \\
$^{1}$Dipartimento di Matematica, Politecnico di Milano\\
piazza Leonardo da Vinci, 32 \\
20133 Milano,  Italy
\\ \\ 
$^2$Universit\'e de Toulon, CNRS, CPT, UMR 7332, 83957 La Garde, France\\
Aix-Marseille Univ, CNRS, CPT, UMR 7332, Case 907, 13288 Marseille, France 
\\ \\
$^3$Institute of Mathematics, Jagiellonian University,\\ 
{\L}ojasiewicza 6, 30-348 Krak\'ow, Poland
}
\maketitle
\thispagestyle{empty}
\bigskip
\bigskip
\noindent{\small{\bf Abstract.} 
In their 1972 study of {\sl approach to equilibrium,} Lanford and Robinson
showed that gauge-invariant quasi-free states of lattice fermions maximize
entropy among all translation-invariant states with a fixed two-point function 
and suggested that the maximizer is unique. In subsequent work on this topic,
the uniqueness question re-emerged, together with the problem of whether such
quasi-free states are weak Gibbs states. We provide a positive answer to both
questions within a class of states whose momentum-space two-point function 
$\widehat C$ satisfies $0<\widehat C(k)<1$ and belongs to the Wiener algebra
of the Brillouin zone. The proof reveals that both the entropy maximization
principle and weak Gibbsianity follow directly from the thermodynamic formalism
for lattice fermions.}

\tableofcontents

\section{Introduction}

In their 1972 study of approach to equilibrium of free fermionic
systems~\cite{Lanford1972}, Lanford and Robinson proved, under natural
thermodynamic assumptions, that among all translation-invariant states with
fixed two-point function, the specific entropy is maximized by gauge-invariant
quasi-free states. They further suggested that this maximizer should be unique;
see Proposition~1a and the end of Section~A in~\cite{Lanford1972}.

We establish this conjecture for systems on the lattice $\ZZ^d$, under a
natural regularity assumption on their two-point function. We also prove the
related conjecture that these maximizers are weak Gibbs states. Both
results follow directly from the thermodynamic formalism for lattice fermions
developed by Araki--Moriya~\cite{Araki2003} and from the perspective on weak
Gibbsianity introduced in~\cite{Jaksic2023a,Jaksic2026b}. These
observations fit naturally into the research program initiated in the latter
works. Since the arguments are conceptually simple and may be of independent
interest, we present them here in a self-contained form.

To state the main result, we first recall the standard CAR framework and the
Lanford--Robinson result~\cite[Proposition~1a]{Lanford1972}; further details are
provided in Section~\ref{sec-prelim}.

The one-particle space for fermions is $\fh=\ell^2(\ZZ^d)$, and we denote by
$\fh_X=\ell^2(X)$ the closed subspaces of $\fh$ associated to subsets $X\subset\ZZ^d$. 
We let $\left(\delta_x\right)_{x\in\ZZ^d}$ be the canonical orthonormal basis of $\fh$. 
We further denote by $\hat f\in L^2(\TT^d,\tfrac{\d	k}{(2\pi)^d})$ the Fourier transform 
of $f\in\fh$,\footnote{$\TT^d$ denotes the Brillouin zone $[-\pi,\pi]^d$, identified 
with the $d$-dimensional torus.} and by $\cB(\fh)$ the algebra of bounded operators on
$\fh$.

$\fA=\CAR(\fh)$ denotes the CAR algebra over $\fh$, and $a(f)$/$a^\ast(f)$ the 
annihilation/creation operator associated with $f\in\fh$. For $x\in\ZZ^d$, 
we abbreviate $a_x\coloneq a(\delta_x)$ and $a_x^\ast\coloneq a^\ast(\delta_x)$.
$\cSI$ denotes the set of all translation-invariant states on $\fA$.
To $\omega\in\cSI$ we associate its specific entropy $s(\omega)$
and its covariance, the operator $C\in\cB(\fh)$ defined by
$$
\langle f,Cg\rangle=\omega(a^\ast(g)a(f)).
$$
$C$ commutes with the unitary action of $\ZZ^d$ and satisfies $0\le C\le\one$.
We shall say that such a $C\in\cB(\fh)$ is a \textit{covariance operator.}

To a covariance operator $C\in\cB(\fh)$ we associate its \textit{two-point function}
$C(x,y)=\langle\delta_x,C\delta_y\rangle=(C\delta_y)(x)$. In the Fourier representation of $\fh$,
$C$ acts as a multiplication operator
$$
(Cf)\hat{\ }(k)=\widehat{C}(k)\hat f(k),
$$
where $\TT^d\ni k\mapsto\widehat{C}(k)$ is the Fourier transform of $C\delta_0\in\fh$, 
and hence a measurable function  satisfying $\widehat{C}(k)\in[0,1]$ 
for a.e.\;$k\in\TT^d$. We denote by $\omega_C$ the unique translation-invariant 
and \textit{gauge-invariant quasi-free state} on $\fA$ with covariance $C$. Its specific 
entropy is given by 
\[
s(\omega_C)=-\int_{\TT^d}\left(\widehat{C}(k)\log\widehat{C}(k)
+(1-\widehat{C}(k))\log(1-\widehat{C}(k))\right)\frac{\d k}{(2\pi)^d}.
\]
Finally, $\cSI(C)$ denotes the set of all $\omega\in\cSI$ with covariance $C$. 
Clearly $\omega_C\in\cSI(C)$. Lanford and Robinson prove
the following result~\cite[Proposition 1a]{Lanford1972}.
\begin{theorem} \label{thm:lr}
\beq
\sup _{\omega \in \cSI(C)}s(\omega)= s(\omega_C)
\label{eq:lr}
\eeq
\end{theorem}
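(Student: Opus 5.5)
Since $\omega_C\in\cSI(C)$, the supremum is at least $s(\omega_C)$, so the content of \eqref{eq:lr} is the upper bound $s(\omega)\le s(\omega_C)$ for every $\omega\in\cSI(C)$. I would prove it at finite volume and then pass to the limit. Take boxes $\Lambda\uparrow\ZZ^d$ (van Hove, e.g.\ cubes). Let $\omega_\Lambda$ be the restriction of $\omega$ to the finite-dimensional algebra $\fA_\Lambda=\CAR(\fh_\Lambda)$, and let $C_\Lambda=\one_\Lambda C\one_\Lambda$ be the compressed covariance on $\fh_\Lambda$. Write $S(\cdot)$ for von Neumann entropy. The key finite-volume fact is
$$
S(\omega_\Lambda)\le S(\omega_{C_\Lambda})=-\tr\bigl(C_\Lambda\log C_\Lambda+(\one-C_\Lambda)\log(\one-C_\Lambda)\bigr),
$$
where $\omega_{C_\Lambda}$ is the gauge-invariant quasi-free state on $\fA_\Lambda$ with two-point function $C_\Lambda$.

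To prove this inequality, first handle the case $0<C_\Lambda<\one$. Then $\omega_{C_\Lambda}$ has density $\rho=\e^{-d\Gamma(H)}/Z$ with $H=\log((\one-C_\Lambda)C_\Lambda^{-1})$. The observable $-\log\rho=d\Gamma(H)+\log Z$ is a quadratic expression $\sum H_{xy}a_x^\ast a_y$ plus a constant, so its expectation depends only on the two-point function. Since $\omega_\Lambda$ and $\omega_{C_\Lambda}$ share the same two-point function, nonnegativity of relative entropy gives
$$
0\le S(\omega_\Lambda|\omega_{C_\Lambda})=-S(\omega_\Lambda)-\omega_\Lambda(\log\rho)=-S(\omega_\Lambda)+S(\omega_{C_\Lambda}).
$$
The general case ($C_\Lambda$ having eigenvalues $0$ or $1$) follows either by approximating with $(1-\epsilon)C_\Lambda+\epsilon/2$, mixing $\omega$ with a suitable state and using concavity and continuity in finite dimension, or directly: an eigenvector $f$ with eigenvalue $0$ forces $\omega_\Lambda(a^\ast(f)a(f))=0$, so $\omega_\Lambda$ lives on the corresponding factor and one reduces the dimension.

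Next, divide by $|\Lambda|$ and take the limit. The left side tends to $s(\omega)$ by definition of the specific entropy. The right side is $|\Lambda|^{-1}\tr\,\eta(C_\Lambda)$ with $\eta(t)=-t\log t-(1-t)\log(1-t)$ continuous on $[0,1]$. This is a Szegő-type limit for truncated Toeplitz operators, and it should converge to $\int\eta(\widehat C)\,\tfrac{\d k}{(2\pi)^d}=s(\omega_C)$. Concretely: for polynomials $p$, $|\Lambda|^{-1}\tr\,p(C_\Lambda)\to\int p(\widehat C)$, because $\tr(C_\Lambda^n)-\tr(\one_\Lambda C^n\one_\Lambda)=o(|\Lambda|)$ (a boundary effect, controlled via the Hilbert–Schmidt norm of $\one_{\Lambda^c}C\one_\Lambda$, which is $o(|\Lambda|)$ since $C\delta_0\in\ell^2$). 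Then uniform Weierstrass approximation of $\eta$ on $[0,1]$, together with $\|C_\Lambda\|\le1$, extends the limit to $\eta$. Actually only $\limsup$ is needed.

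The main obstacle is this Szegő step for a general measurable $\widehat C$ with no regularity. The Hilbert–Schmidt boundary estimate $\|\one_{\Lambda^c}C\one_\Lambda\|_2^2=\sum_{x\in\Lambda}\sum_{y\notin\Lambda}|C(x,y)|^2=o(|\Lambda|)$ follows from $\sum_y|C(0,y)|^2<\infty$ by dominated convergence over the distance to the boundary. The telescoping bound $|\tr(AB)|\le\|A\|_2\|B\|_2$ then handles the powers. I would check this carefully, since it is the only genuinely analytic point; the rest is the standard Gibbs variational argument.
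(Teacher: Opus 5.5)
Your proof is correct, but it takes a genuinely different and more elementary route than the paper. You compare $\omega_\Lambda$, box by box, with the quasi-free state $\omega_{C_\Lambda}$ whose covariance is exactly the compression $C_\Lambda$. Since $-\log$ of its density is quadratic with coefficients fixed by $C_\Lambda$, non-negativity of relative entropy gives $S(\omega_\Lambda)\le S(\omega_{C_\Lambda})$ at every finite volume, for an arbitrary covariance operator $C$.

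The paper (Section~3) instead takes as reference the local Gibbs state $\nu_\Lambda=\e^{-H_\Lambda(\Phi)-P_\Lambda(\Phi)}$ built from $h_\Lambda=\un_\Lambda h\un_\Lambda$, the compression of $h=\log((\one-C)C^{-1})$. The covariance of $\nu_\Lambda$ is not $C_\Lambda$, so the finite-volume bound $S(\omega_\Lambda)\le\omega(H_\Lambda(\Phi))+P_\Lambda(\Phi)$ has a right-hand side different from $S(\omega_{C_\Lambda})$. It becomes $s(\omega)\le\ce_\Phi(\omega_C)+P(\Phi)=s(\omega_C)$ only after the thermodynamic limit, and only by using $\omega_C\in\cSeq(\Phi)$. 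That step requires $\Phi\in\cB$, i.e.\ the regularity $\widehat C\in\cW$, $0<\widehat C<1$, together with the Araki--Moriya identification of KMS and equilibrium states.

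Thus your argument recovers Theorem~\ref{thm:lr} in the full generality of Lanford--Robinson, whereas the paper proves it only for regular $C$. What the paper's detour buys is the equality case. Because its reference states are local Gibbs states of a fixed interaction $\Phi\in\cB$, $s(\omega)=s(\omega_C)$ forces $\omega\in\cSeq(\Phi)=\{\omega_C\}$, which is Theorem~\ref{thm:main}. In your setup, equality only gives $\Ent(\omega_\Lambda|\omega_{C_\Lambda})=o(|\Lambda|)$, which by itself does not yield $\omega=\omega_C$.

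Finally, the Szeg\H{o} step you single out as the main obstacle is not needed for the inequality. The restriction of $\omega_C$ to $\fA_\Lambda$ is exactly $\omega_{C_\Lambda}$, since the determinant formula only involves $\langle g_i,Cf_j\rangle=\langle g_i,C_\Lambda f_j\rangle$ for $f_j,g_i\in\fh_\Lambda$. Hence $|\Lambda|^{-1}S(\omega_{C_\Lambda})\to s(\omega_C)$ by the very definition of specific entropy. Your Hilbert--Schmidt/Weierstrass argument is sound, but it merely re-derives the integral formula for $s(\omega_C)$.
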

In the same work they have also suggested that $\omega_C$ is the unique maximizer 
in~\eqref{eq:lr}. In other words, one expects that if the two-point correlations
are fixed and maximal disorder is imposed, then no additional structure survives: 
all higher-order correlations are completely determined by the two-point function 
through the fermionic Wick formula.

The space $\ell^1(\ZZ^d)$ equipped with the convolution operation 
$$
(f\star g)(x)=\sum_{y\in\ZZ^d}f(x-y)g(y),
$$
is a
Banach $\ast$-algebra. The Fourier transform maps it onto a subalgebra of $C(\TT^d)$, 
the Wiener algebra $\cW$. We will need the following notion of regularity for
covariance operators.
\begin{definition}\label{def:tr-C}
A covariance operator $C\in\cB(\fh)$ is called regular if $\widehat{C}\in\cW$ and
$\widehat{C}(\TT^d)\!\subset\,]0,1[$.
\end{definition}

Note that $\widehat{C}\in\cW$ iff the function $x\mapsto C(x,0)$ belongs to $\ell^1(\ZZ^d)$.
Let us explore the consequences of this definition from the perspective of the 
thermodynamic formalism. The function $m(z)=\log((1-z)/z)$ is analytic on the
cut plane $\CC\setminus\big(\,]-\infty,0]\cup[1,\infty[\,\big)$. Since $\widehat{C}$
maps $\TT^d$ to a compact subset of $]0,1[$, Wiener--Levy's theorem implies that 
$\widehat{h}\coloneq m\circ\widehat{C}\in\cW$. We denote by $h$ the
associated operator on $\fh$, and note that since $m\colon ]0,1[\to\RR$,
$h$ is a self-adjoint element of $\cB(\fh)$ such that
$$
C=\left(\one+\e^h\right)^{-1}.
$$
We will refer to $h$ as the one-particle Hamiltonian associated to $C$
and denote by $\tau_h=\left(\tau_h^t\right)_{t\in\RR}$ the group of 
Bogoliubov automorphisms of $\fA$ generated by $h$. The unique $\tau_h$-KMS
state at $\beta=1$ is $\omega_C$. It also
follows from the Araki--Moriya thermodynamic formalism of fermionic 
systems that $\omega_C$ is the unique equilibrium state for the 
fermionic interaction 
$\Phi$ defined by\footnote{Here, one allows for $x=y$.} 
\beq
\Phi(X)\coloneq\begin{cases*}
\tfrac12\left(h(x,y)a_x^\ast a_y+h(y,x)a_y^\ast a_x\right)
& if $X = \{x,y\}$;\\[4pt]
0&otherwise,
\end{cases*}
\label{eq:Phidef}
\eeq
where $h(x,y)=\langle\delta_x,h\delta_y\rangle$. 
We will review the thermodynamic formalism of fermionic systems  
and discuss the naturalness of definition~\ref{def:tr-C} in this context
in Section~\ref{sec-prelim}.

\newpage

Our first result is 
\begin{theorem}\label{thm:main}
Suppose that $C$ is a regular covariance operator. Then $\omega_C$ is the 
unique maximizer in~\eqref{eq:lr}.
\end{theorem}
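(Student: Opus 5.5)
The plan is to reduce uniqueness of the entropy maximizer to uniqueness of the equilibrium state for the interaction $\Phi$ of~\eqref{eq:Phidef}, via the Gibbs variational principle of the Araki--Moriya formalism. The key observation is that on $\cSI(C)$ the specific energy of $\Phi$ is constant. For $\omega\in\cSI$ the specific energy is
$$
e_\Phi(\omega)=\omega(E_\Phi),\qquad E_\Phi=\sum_{X\ni 0}\frac{\Phi(X)}{|X|},
$$
with the usual convention for the one-point sets $X=\{x\}$. Since $\Phi(X)$ is quadratic, $e_\Phi(\omega)$ is a linear combination of the numbers $\omega(a_x^\ast a_y)=C(y,x)$. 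Explicitly, up to the combinatorial factors, which cancel out,
$$
e_\Phi(\omega)=\sum_{x}h(0,x)\,C(x,0).
$$
This sum is absolutely convergent because $h(\cdot,0)\in\ell^1$ (since $\widehat h\in\cW$) and $C$ is bounded. The Fourier transform then gives
$$
e_\Phi(\omega)=\int_{\TT^d}\widehat h(k)\widehat C(k)\,\frac{\d k}{(2\pi)^d}.
$$
This depends only on $C$, so $e_\Phi$ takes the same value on every $\omega\in\cSI(C)$. One should check that $\Phi$ belongs to the Banach space of translation-invariant, even interactions on which the Araki--Moriya theory applies. Summability of $\sum_{X\ni0}\|\Phi(X)\|$ follows from $\sum_x|h(0,x)|<\infty$, and evenness is clear because $\Phi(X)$ is quadratic.

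Next I invoke the Araki--Moriya variational principle. The pressure satisfies
$$
P(\Phi)=\sup_{\omega\in\cSI}\bigl(s(\omega)-e_\Phi(\omega)\bigr),
$$
the maximizers are exactly the equilibrium states of $\Phi$, and, as recalled before the statement, $\omega_C$ is the unique such state. Now take $\omega\in\cSI(C)$ with $s(\omega)=s(\omega_C)$, that is, a maximizer in~\eqref{eq:lr}; it exists as a candidate by Theorem~\ref{thm:lr}. Since $e_\Phi(\omega)=e_\Phi(\omega_C)$, we get
$$
s(\omega)-e_\Phi(\omega)=s(\omega_C)-e_\Phi(\omega_C)=P(\Phi).
$$
Hence $\omega$ is an equilibrium state for $\Phi$, and therefore $\omega=\omega_C$.

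The main obstacle is the input that $\omega_C$ is the \emph{unique} translation-invariant equilibrium state for $\Phi$. Araki--Moriya show that equilibrium states coincide with $\tau_\Phi$-KMS states at $\beta=1$ for even translation-invariant states, under suitable conditions on $\Phi$. So I need three things: (i) the dynamics generated by $\Phi$ is the Bogoliubov group $\tau_h$; (ii) $\tau_h$ has a unique KMS state, which is $\omega_C$; and (iii) every equilibrium state is even, so that the equivalence applies.

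For (i), $\Phi$ is a two-body interaction with $\ell^1$ coefficients. Hence the local derivation converges on $\fAloc$ to $[\,\mathrm{d}\Gamma(h),\cdot\,]$, and since $h$ is bounded this generates $\tau_h$. For (ii), uniqueness of the KMS state for a quasi-free dynamics follows from the KMS condition applied to $a^\ast(f)$, which forces
$$
\omega(a^\ast(g)a(f))=\langle f,(\one+\e^h)^{-1}g\rangle,
$$
and similarly fixes all monomials, giving the Wick formula. For (iii), evenness of translation-invariant states is a known result of Araki--Moriya; alternatively, one can replace $\omega$ by its even part, which has the same covariance and entropy, and this route needs extra care.

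I expect (i) to require the most care, since the Araki--Moriya equivalence between equilibrium and KMS states is stated for a suitable class of interactions, and one must verify that the regularity assumption places $\Phi$ in it. This is exactly where $\widehat C\in\cW$ with values in $]0,1[$, through Wiener--Levy, is used.
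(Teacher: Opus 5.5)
Your proposal is correct and is essentially the paper's argument. The specific energy $\ce_\Phi$ is constant on $\cSI(C)$ (Proposition~\ref{prop:RegC}\textbf{(iv)}, which you obtain more directly from $E(\Phi)$ rather than as a thermodynamic limit), so any maximizer lands in $\cSeq(\Phi)$; and $\cSeq(\Phi)=\{\omega_C\}$ follows from Araki--Moriya's $\cSeq(\Phi)=\cSKMS(\Phi)$, the identification $\alpha_\Phi=\tau_h$, and uniqueness of the $\tau_h$-KMS state. The paper differs only in two respects: it rederives $s(\omega_C)\ge s(\omega)$ from non-negativity of relative entropy instead of citing Theorem~\ref{thm:lr}, and it makes your step (i) rigorous via the explicit bound~\eqref{eq:analytic}, which shows every local element is analytic for $\bar\delta_\Phi$ (Condition~\textbf{(i)} of Theorem~\ref{thm:br-dyn}) and is exactly the hypothesis needed to invoke Theorem~\ref{eqkms}.
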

 
We now turn to our second result. In the following, $\Lambda$ always denotes
a cubic box in $\ZZ^d$ centered at the origin, and we write $\Lim$ for a 
limit over an increasing sequence of such cubes. The local Hamiltonian 
induced in the box $\Lambda$ by the interaction~\eqref{eq:Phidef} is
\[
H_\Lambda(\Phi)\coloneq\sum_{X\subset\Lambda}\Phi(X)
=\sum_{x,y\in\Lambda}h(x,y)a_x^\ast a_y.
\]
It coincides with the gauge-invariant quadratic Hamiltonian associated 
to the compression of $h$ to the finite-dimensional subspace $\fh_\Lambda$,
$h_\Lambda\coloneq\un_\Lambda h\un_\Lambda$, where 
$\un_\Lambda$ denotes the orthogonal projection on $\fh_\Lambda$.

We denote by $\omega_\Lambda$ 
the restriction of a state $\omega \in \cSI$ to the finite-dimensional subalgebra
$\CAR(\fh_\Lambda)$. We identify $\omega_\Lambda$ with its density matrix in 
the canonical trace representation. The pressure of the interaction~\eqref{eq:Phidef}
in the box $\Lambda$ is
$$
P_\Lambda(\Phi)=\log\tr(\e^{-H_\Lambda(\Phi)}).
$$

\begin{theorem}\label{thm:weak-gibbs} With the same hypothesis as in Theorem \ref{thm:main}, 
the state $\omega_C$ is weak Gibbs in the sense that there are positive constants $c_\Lambda$ 
satisfying 
\[
\Lim\tfrac{\log c_\Lambda}{|\Lambda|}=0,
\]
and 
$$
c_\Lambda^{-1}\e^{- H_\Lambda(\Phi)-P_\Lambda(\Phi)}
\leq \omega_{C,\Lambda}
\leq c_\Lambda \e^{-H_\Lambda(\Phi)-P_\Lambda(\Phi)}.
$$
\end{theorem}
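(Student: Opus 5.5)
The plan is to compare two gauge-invariant quasi-free density matrices on the finite-dimensional algebra $\CAR(\fh_\Lambda)$ and reduce the operator inequality to a one-particle trace-norm estimate of Szegő type.

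\textbf{Step 1: the local state is Gibbs for the compressed covariance.} The restriction $\omega_{C,\Lambda}$ is the gauge-invariant quasi-free state on $\CAR(\fh_\Lambda)$ with covariance $C_\Lambda\coloneq\un_\Lambda C\un_\Lambda$. Since $\widehat C(\TT^d)$ is a compact subset $[\epsilon,1-\epsilon]$ of $]0,1[$, we have $\epsilon\le C_\Lambda\le 1-\epsilon$ on $\fh_\Lambda$. Hence $k_\Lambda\coloneq m(C_\Lambda)$ is well defined, with $\|k_\Lambda\|$ bounded uniformly in $\Lambda$. In the trace representation,
\[
\omega_{C,\Lambda}=\Gamma(\e^{-k_\Lambda})/\tr\Gamma(\e^{-k_\Lambda}),
\qquad
\e^{-H_\Lambda(\Phi)}=\Gamma(\e^{-h_\Lambda}),
\]
where $\Gamma$ denotes second quantization on the antisymmetric Fock space over $\fh_\Lambda$. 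Note that $h_\Lambda=\un_\Lambda m(C)\un_\Lambda$, whereas $k_\Lambda=m(\un_\Lambda C\un_\Lambda)$. The whole proof consists in showing that these two differ only by a boundary term.

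\textbf{Step 2: reduction to a one-particle estimate.} Set
\[
M_\Lambda\coloneq \e^{h_\Lambda/2}\e^{-k_\Lambda}\e^{h_\Lambda/2}>0 .
\]
By multiplicativity of $\Gamma$,
\[
\Gamma(\e^{-k_\Lambda})=\Gamma(\e^{-h_\Lambda/2})\,\Gamma(M_\Lambda)\,\Gamma(\e^{-h_\Lambda/2}).
\]
Let $\mu_j$ be the eigenvalues of $M_\Lambda$. Then $\Gamma(M_\Lambda)$ is diagonal with eigenvalues that are products of distinct $\mu_j$'s, so
\[
\e^{-\delta_\Lambda}\le \Gamma(M_\Lambda)\le \e^{\delta_\Lambda},
\qquad \delta_\Lambda\coloneq\sum_j|\log\mu_j|.
\]
Conjugating by $\Gamma(\e^{-h_\Lambda/2})$ gives
\[
\e^{-\delta_\Lambda}\e^{-H_\Lambda(\Phi)}\le\Gamma(\e^{-k_\Lambda})\le \e^{\delta_\Lambda}\e^{-H_\Lambda(\Phi)}.
\]
Taking traces, $|\log\tr\Gamma(\e^{-k_\Lambda})-P_\Lambda(\Phi)|\le\delta_\Lambda$. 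Dividing through, the theorem follows with $c_\Lambda=\e^{2\delta_\Lambda}$, provided $\delta_\Lambda=o(|\Lambda|)$.

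Because $h_\Lambda$ and $k_\Lambda$ are uniformly bounded, $M_\Lambda$ has spectrum in a fixed compact subset of $]0,\infty[$, so $|\log\mu|\le c\,|\mu-1|$ there. Therefore
\[
\delta_\Lambda\le c\|M_\Lambda-\one\|_1\le c'\|\e^{-k_\Lambda}-\e^{-h_\Lambda}\|_1 ,
\]
and it suffices to prove $\|\e^{-k_\Lambda}-\e^{-h_\Lambda}\|_1=o(|\Lambda|)$.

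\textbf{Step 3: the Szegő-type estimate (main obstacle).} The key input is the boundary bound: for $A,B$ translation invariant with $\widehat A,\widehat B\in\cW$,
\[
\|\un_\Lambda AB\un_\Lambda-\un_\Lambda A\un_\Lambda B\un_\Lambda\|_1=\|\un_\Lambda A\un_{\Lambda^c}B\un_\Lambda\|_1=o(|\Lambda|).
\]
I would prove this by truncating the $\ell^1$ kernel of $A$ to finite range $R$. The error is then at most $\|a-a_R\|_{\ell^1}\,\|B\|\,|\Lambda|$, since $\|\un_\Lambda A\un_{\Lambda^c}\|_1\le|\Lambda|\,\|a\|_{\ell^1}$. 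The truncated part is supported in an $R$-neighbourhood of $\partial\Lambda$ and has rank $O(R|\partial\Lambda|)$.

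Iterating the boundary bound gives $\|\un_\Lambda p(C)\un_\Lambda-p(C_\Lambda)\|_1=o(|\Lambda|)$ for every polynomial $p$. The target function is $g=\exp\circ(-m)$, which is real-analytic on $[\epsilon,1-\epsilon]$. Note that $\e^{-h_\Lambda}=\exp(-\un_\Lambda m(C)\un_\Lambda)$ still has a compression inside the exponential, so the comparison must be done in two places:
\[
\e^{-k_\Lambda}=g(C_\Lambda)\quad\text{versus}\quad \un_\Lambda g(C)\un_\Lambda,
\qquad
\e^{-h_\Lambda}\quad\text{versus}\quad \un_\Lambda \e^{-h}\un_\Lambda=\un_\Lambda g(C)\un_\Lambda .
\]
Both comparisons are of the same type, by the polynomial estimate applied to $C$ and to $h$.

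The two error terms are controlled as follows. Approximate $g$ (respectively $\exp$ on the bounded spectrum of $h$) uniformly by polynomials $p$ with error $\eta$ on a neighbourhood of the relevant spectra. Then the non-boundary error contributes at most $2\eta|\Lambda|$ in trace norm, since all operators act on a space of dimension $|\Lambda|$. The boundary error contributes $o(|\Lambda|)$ for each fixed $p$. Letting $\Lambda\to\infty$ and then $\eta\to0$ gives the claim.

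I expect Step 3 to be the main difficulty, specifically making the polynomial approximation uniform while the boundary term is only $o(|\Lambda|)$ for each fixed polynomial. The scheme above, with dimension $|\Lambda|$ absorbing the uniform error, should handle this cleanly.
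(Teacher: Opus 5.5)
Your proof is correct, and it takes a genuinely different route from the paper's. I checked each step: the eigenvalue bound for $\Gamma(M_\Lambda)$, the Lipschitz bound on $\log$ over a $\Lambda$-independent spectral interval, the $\ell^1$-truncation boundary estimate, and the polynomial approximation with the uniform error absorbed by $\dim\fh_\Lambda=|\Lambda|$.

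The paper never compares one-particle density matrices. It regards $\omega_C$ as the $\beta=1$ KMS state of $\alpha_\Phi=\tau_h$ and introduces the decoupled dynamics $\tau_{\hdec}$, $\hdec=h_\Lambda\oplus h_{\Lambda^c}$, whose KMS states restrict on $\fA_\Lambda$ exactly to $\e^{-H_\Lambda(\Phi)-P_\Lambda(\Phi)}$. It then shows that the surface energy $W_\Lambda(\Phi)$, together with its analytic continuations $\alpha^{\i/2}(W_\Lambda(\Phi))$ under both dynamics, is $o(|\Lambda|)$ in norm; this is where $\sum_{x,y}|w_\Lambda(x,y)|=o(|\Lambda|)$ enters. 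Finally it imports the Araki-type perturbation inequality \cite[Theorem~3.1]{Jaksic2023a} relating the KMS states of the two dynamics.

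You instead use that both $\omega_{C,\Lambda}$ and the local Gibbs state are gauge-invariant quasi-free on $\CAR(\fh_\Lambda)$, with one-particle densities $\e^{-k_\Lambda}$ and $\e^{-h_\Lambda}$. Through the sandwich $M_\Lambda$ and multiplicativity of $\Gamma$, everything reduces to $\|\e^{-k_\Lambda}-\e^{-h_\Lambda}\|_1=o(|\Lambda|)$, a trace-norm form of Szeg\H{o}'s first limit theorem.

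Your route gives an elementary, self-contained argument with an explicit constant $c_\Lambda=\e^{2\delta_\Lambda}$. It needs no KMS theory, no Araki--Moriya formalism and no perturbation of dynamics. It also shows that the Wiener condition is inessential for this particular inequality. If you bound the boundary term by $\|\un_\Lambda A\un_{\Lambda^c}\|_2\,\|\un_{\Lambda^c}B\un_\Lambda\|_2$ instead of truncating the $\ell^1$ kernel, these Hilbert--Schmidt norms are $o(|\Lambda|^{1/2})$ for any bounded translation-invariant $A,B$, since their kernels lie in $\ell^2$. Your argument then works whenever the essential range of $\widehat C$ is a compact subset of $]0,1[$, with $H_\Lambda(\Phi)=\sum_{x,y\in\Lambda}h(x,y)a_x^\ast a_y$.

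The paper's route, by contrast, does not depend on a one-particle reduction. The quasi-free structure only serves to identify the dynamics with Bogoliubov groups and to bound $\alpha^{z}(W_\Lambda(\Phi))$. The mechanism itself (small surface energies plus perturbation theory of KMS states) is model-independent, so the same scheme applies to KMS states of genuinely interacting physical interactions, in line with the program of \cite{Jaksic2023a,Jaksic2026b}.
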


Weak Gibbsianity plays a  central role in the research program introduced in~\cite{Jaksic2023a,Jaksic2026b}. 
 
To our knowledge, it has not been explicitly observed in the literature that
Theorem~\ref{thm:main} follows directly from the non-negativity of relative
entropy density\footnote{This fact also plays a central role in Lanford--Robinson's
proof of Theorem~\ref{thm:lr}.} and the Araki--Moriya thermodynamic formalism. 
Likewise, Theorem~\ref{thm:weak-gibbs} is a consequence of~\cite[Theorem~3.1]{Jaksic2023a}. 
Gauge-invariant quasi-free states play an important role across physics and mathematical 
physics, and we hope that these observations may find applications beyond the study of
approach to equilibrium.

Finally, we remark that Lanford and Robinson in~\cite{Lanford1972} worked in
the continuous Fermi systems setting in which the one-particle Hilbert space is
$L^2(\RR^d)$. The thermodynamic formalism of such systems has not been
developed, and our method of proof does not apply.

\paragraph*{Acknowledgments.} This work was partly funded by the CY
Initiative grant Investisse\-ments d’Avenir, grant number ANR-16-IDEX-0008. VJ
acknowledges the support of the MUR grant ``Dipartimento di Eccellenza 2023--2027''
of Dipartimento di Matematica, Politecnico di Milano. We also acknowledge the
support of the ANR project DYNACQUS, grant number ANR-24-CE40-5714. 

The results of this note were developed during the Winter 2026 Mathematical
Physics Working Seminar at the Politecnico di Milano. We thank T.~Benoist,
A.~Lucia, M.~Moscolari, C.~Vogel, and M.~Wrochna for valuable discussions.

 \section{Lattice Fermions}
 \label{sec-prelim}

 \subsection{Quasi-Free States}

We describe gauge-invariant quasi-free states of the CAR algebra over 
an arbitrary Hilbert space $\fh$.

$\CAR(\fh)$ is the $C^\ast$-algebra generated by the creation/annihilation 
operators $a^\ast(f)/a(f)$ associated to elements $f\in\fh$, satisfying
the anti-commutation relations
$$
\{a(f),a^\ast(g)\}=\langle f,g\rangle,\qquad\{a(f),a(g)\}=0,
$$
for all $f,g\in\fh$. Let $h$ be a (possibly unbounded) self-adjoint operator on $\fh$. 
The  maps
\[
\tau_h^t(a(f))= a(\e^{\i th}f)
\]
uniquely extend to a strongly continuous group $\left(\tau_h^t\right)_{t\in\RR}$
of $\ast$-automorphisms of $\CAR(\fh)$, the group of \textit{Bogoliubov automorphisms} 
generated by $h$. It describes the Heisenberg dynamics of a system of non-interacting 
fermions with one-particle Hamiltonian $h$.

More generally, a $C^\ast$\textit{-dynamics} on $\CAR(\fh)$ is a strongly continuous group 
$\alpha=\left(\alpha^t\right)_{t\in\RR}$ of $\ast$-auto\-mor\-phisms. A state $\omega$ on 
$\CAR(\fh)$ is said to be \textit{KMS for} $\alpha$ \textit{at inverse temperature}
$\beta\neq0$ whenever, for any $A,B\in\CAR(\fh)$, there exists a function $F_{A,B}$
which is analytic in the strip $\big\{z\in\CC\mid \sgn(\beta)\operatorname{Im}(z)\! \in \,]0,|\beta|[\big\}$ and continuous 
on its closure such that, for t$\in\RR$,
$$
F_{A,B}(t)=\omega(A\alpha^t(B)),\qquad 
F_{A,B}(t+\i\beta)=\omega(\alpha^t(B)A).
$$
The \textit{gauge group} of $\CAR(\fh)$ is the group of Bogoliubov automorphisms
generated by the unit $\one$, \ie $\tau_\one^\theta(a(f))=a(\e^{\i\theta}f)$. A state
$\omega$ on $\CAR(\fh)$ is \textit{gauge-invariant} whenever
$\omega\circ\tau_\one^\theta=\omega$ for all $\theta\in\RR$.

A gauge-invariant state $\omega$ with covariance operator $C$ is \textit{quasi-free} iff,
for all integers $n,m$ and any $f_1,\ldots,g_m\in\fh$,
\[
\omega(a^\ast(f_1)\cdots a^\ast(f_n)a(g_m)\cdots a(g_1))
=\delta_{nm}\det\left(
\left[\langle g_i,Cf_j\rangle\right]_{i,j\in\llbracket1,n\rrbracket}
\right).
\]
Any covariance operator $C$ determines a unique gauge-invariant quasi-free 
state $\omega_C$ by the same formula. This state  is faithful iff 
$\spec(C)\cap\{0,1\}=\emptyset$.\footnote{$\spec(C)$ denotes the spectrum of $C$.}

A faithful gauge-invariant quasi-free state with the covariance $C$ is a KMS state 
at $\beta=1$ for the group of Bogoliubov automorphisms generated by $h=\log((\one-C)C^{-1})$.
Conversely, 
\begin{theorem}\label{thm:kms-qf}
Let $h$ be a self-adjoint operator on $\fh$. The  gauge-invariant quasi-free state with covariance  
$$
C=\left(\one+\e^h\right)^{-1}
$$
is the unique $\beta=1$ KMS state  for $\tau_h$ on $\CAR(\fh)$. 
\end{theorem}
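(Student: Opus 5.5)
We prove two things: that $\omega_C$ is $\tau_h$-KMS at $\beta=1$ (existence), and that any $\beta=1$ KMS state for $\tau_h$ coincides with $\omega_C$ (uniqueness). The plan is to prove uniqueness directly from the KMS condition by computing the two-point function and then all higher correlations. For existence, we either argue by finite-dimensional approximation or verify the KMS relations on monomials directly.

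\emph{Existence.} Suppose first that $h$ is bounded and $\fh$ is finite-dimensional. Then $\CAR(\fh)$ is a full matrix algebra and $\tau_h^t(A)=\e^{\i t H}A\e^{-\i tH}$ with $H=\d\Gamma(h)=\sum_{ij}h_{ij}a_i^\ast a_j$. The Gibbs state $\e^{-H}/\tr\e^{-H}$ is then $\beta=1$ KMS. Diagonalizing $h$ shows that this state factorizes over eigenmodes, so it is gauge-invariant quasi-free with covariance $(\one+\e^h)^{-1}$.

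For general $h$, we instead verify the KMS condition for $\omega_C$ directly on polynomials in $a^\ast(f),a(g)$ with $f,g$ entire analytic vectors for $h$ (for example, spectrally localized in bounded sets). These polynomials form a dense $\tau_h$-invariant $\ast$-subalgebra of entire analytic elements. The required identity $\omega(AB)=\omega(B\,\tau_h^{\i}(A))$ reduces, via the determinant (Wick) formula, to the two-point identity $\langle g,Cf\rangle=\langle f,(\one-C)\e^{-h}g\rangle$ applied entrywise. The two-point identity holds because $C\e^{h}=\one-C$. Density then extends the KMS property to all of $\CAR(\fh)$ by the standard criterion (Bratteli--Robinson, Definition 5.3.1 and the following discussion).

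\emph{Uniqueness.} Let $\omega$ be any $\beta=1$ KMS state for $\tau_h$. For analytic $f,g$ the KMS relation gives
\[
\omega(a^\ast(f)a(g))=\omega(a(g)\,a^\ast(\e^{-h}f))=\langle g,\e^{-h}f\rangle-\omega(a^\ast(\e^{-h}f)a(g)).
\]
Writing $\omega(a^\ast(f)a(g))=\langle g,Df\rangle$, we get $D(\one+\e^{-h})=\e^{-h}$, hence $D=C$ on a dense set and so everywhere by boundedness.

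The main step is the higher correlations. We proceed by induction on the degree of the monomial $a^\sharp(f_1)\cdots a^\sharp(f_n)$. Move the first factor to the right using the KMS relation $\omega(a^\sharp(f)B)=\omega(B\,a^\sharp(\e^{\mp h}f))$. Then anticommute it back to the left through $B$, which produces contractions $\langle\cdot,\cdot\rangle$ times lower-degree monomials, plus $(-1)^{n-1}\omega(a^\sharp(\e^{\mp h}f)B)$. Solving for $\omega(a^\sharp(f')B)$, the operator that appears is $(\one+\e^{\mp h})$, which is invertible. This expresses degree-$n$ correlations through degree-$(n-2)$ ones, with one-particle data given by $C$.

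Odd monomials vanish for the same reason, since there is nothing to contract at degree one. Terms with unequal numbers of creators and annihilators are forced to vanish because the only contractions available are $\{a,a^\ast\}$. The resulting recursion is exactly the Laplace expansion of the determinant formula, so $\omega=\omega_C$ on a dense set and hence everywhere.

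\emph{Expected obstacle.} The delicate point is analytic. For unbounded $h$ the KMS condition must be applied with elements $a^\sharp(f)$ where $f$ lies in the domain of $\e^{\pm h}$ (entire analytic vectors). We must also justify that the strip-analytic function yields the shifted relation $\omega(AB)=\omega(B\tau^{\i}(A))$ for such $A$. Once this is settled on a dense invariant subalgebra, the algebraic induction and continuity arguments are routine.
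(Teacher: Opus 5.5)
Your overall route is the standard one behind the reference the paper cites for this statement (Bratteli--Robinson, Example 5.3.2); the paper gives no proof of its own. You verify KMS for $\omega_C$ on polynomials in analytic generators, and you prove uniqueness by applying the KMS relation to the first factor and then reordering with the CAR. The existence part is fine, and so is the even-degree recursion.

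\textbf{The gap is in the odd-degree case.} Moving $a^\sharp(\e^{\mp h}f)$ back through a monomial $B$ of degree $n-1$ produces the sign $(-1)^{n-1}$. For odd $n$ the relation you actually obtain is
\[
\omega\bigl(a^\sharp((\one-\e^{\mp h})f)\,B\bigr)=\sum_j c_j\,\omega(B_j),
\]
with contraction coefficients $c_j$ and odd monomials $B_j$ of degree $n-2$. The operator here is $\one-\e^{\mp h}$, not $\one+\e^{\mp h}$. Already at degree one, KMS with $B=\one$ only gives $\omega(a^\sharp((\one-\e^{\mp h})f))=0$. The closure of the range of $\one-\e^{\mp h}$ is $(\ker h)^\perp$. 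So your claim that odd monomials vanish ``for the same reason'' is valid only when $\ker h=\{0\}$.

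This case is not marginal for the paper. $\widehat C\equiv\frac12$ is regular and gives $h=0$, i.e.\ trivial dynamics, and there your recursion says nothing at all about odd correlations. More generally, $\ker h\neq\{0\}$ whenever $\widehat C=\frac12$ on a set of positive measure.

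\textbf{The fix} is a separate argument for the modes in $\ker h$, on which $\tau_h$ acts trivially.
\begin{enumerate}[(a)]
\item By induction, density and boundedness, the odd relation does give $\omega(a^\sharp(\phi)B)=0$ for every $\phi\in(\ker h)^\perp$ and every even monomial $B$ of the relevant degree.
\item Split each vector along $\ker h\oplus(\ker h)^\perp$ and move a factor from $(\ker h)^\perp$ to the front with the CAR. This only creates lower odd-degree terms, which vanish by induction. What remains are odd monomials lying in $\CAR(\ker h)$.
\item For $X,Y\in\CAR(\ker h)$ the KMS condition reads $\omega(XY)=\omega(YX)$, so $\omega$ restricts to a tracial state on $\CAR(\ker h)$.
\item $\CAR(\ker h)$ has a unique tracial state. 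Since composing with $\Theta$ maps traces to traces, this trace is $\Theta$-invariant and therefore vanishes on odd elements.
\end{enumerate}

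A minor point: the two-point identity in your existence part should read $\langle g,Cf\rangle=\langle g,(\one-C)\e^{-h}f\rangle$. With $f$ and $g$ swapped, as you wrote it, it is the complex conjugate of the identity you need.
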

For the proof of this result, see \cite[Example 5.3.2]{Bratteli1981}.

\subsection{Thermodynamic Formalism}

The thermodynamic formalism of lattice quantum spin systems, developed in the
1960s and 1970s, is discussed in detail in the classical 
monographs~\cite{Israel1979,Bratteli1981,Simon1993}. Its extension to lattice fermions, 
placing them on the same footing as spin systems, was developed in the major 
work~\cite{Araki2003}. In this section we collect a number of results 
from~\cite{Araki2003} that play a role in the proof of Theorem~\ref{thm:main}.

As in the introduction, we set $\fh=\ell^2(\ZZ^d)$ and $\fA=\CAR(\fh)$. 
$\cF$ denotes the set of all finite subsets of $\ZZ^d$, and $|X|$ the cardinality 
of $X\in\cF$. To each $X\in\cF$ we associate local algebra $\fA_X\coloneq\CAR(\ell^2(X))$ 
and denote by
\[
\fAloc\coloneq\bigcup_{X\in\cF}\fA_X
\]
the dense subalgebra of local elements of $\fA$. For $A\in\fAloc$, let $\supp(A)$ be 
the smallest $X\in\cF$ such that $A\in\fA_X$.

We denote by $\Theta$ the unique involutive $\ast$-automorphism of $\fA$ satisfying
\[
\Theta(a_x)=-a_x,
\]
for all $x\in\ZZ^d$. The lattice $\ZZ^d$ acts naturally on $\fA$ as a group
$\varphi\coloneq\left(\varphi^x\right)_{x\in\ZZ^d}$ of $\ast$-automorphisms,
$\varphi^x$ mapping $\fA_X$ onto $\fA_{X+x}$. A state $\omega$ on $\fA$ is
\textit{translation-invariant} whenever $\omega\circ\varphi^x=\omega$ for all $x\in\ZZ^d$,
and $\cSI$ denotes the set of all translation-invariant states on $\fA$.

Let \(\omega_\Lambda\) denote the density matrix representing the restriction of the state
\(\omega\in\cSI\) to the finite-dimensional algebra \(\fA_\Lambda\).
The von Neumann entropy of this restriction is
\[
S(\omega_\Lambda):=-\tr(\omega_\Lambda\log \omega_\Lambda).
\]
The limit 
\[
s(\omega)\coloneq\Lim\frac{S(\omega_\Lambda)}{|\Lambda|}
\] 
exists and takes values in $[0,\log 2]$. The number $s(\omega)$ is the 
\textit{specific entropy} of the state $\omega\in\cSI(\fA)$. The entropy map 
\[
\cSI(\fA)\ni\omega\mapsto s(\omega)
\]
is affine and upper semicontinuous. For the proof of these results, 
see~\cite[Section~10.2]{Araki2003}.

An \textit{interaction} is a collection $\Phi=\left(\Phi(X)\right)_{X\in\cF}$, where 
$\Phi(X)$ is a self-adjoint element of $\fA_X$. An interaction 
$\Phi$ is called \textit{even} if $\Theta(\Phi(X))=\Phi(X)$ for all $X\in\cF$, 
and \textit{translation covariant} if $\varphi^x(\Phi(X))=\Phi(X+x)$
for all $x\in\ZZ^d$ and $X\in\cF$.

We will only consider even translation covariant interactions satisfying 
\beq
\|\Phi\|\coloneq\sum_{X\ni0}\|\Phi(X)\|<\infty.
\label{eq:sumab}
\eeq 
The set $\cB$ of such interactions equipped with the norm~\eqref{eq:sumab} is a Banach space. 
In~\cite{Araki2003} Araki and Moriya impose a less restrictive condition than~\eqref{eq:sumab}, 
and we will comment on this point in Section~\ref{sec:remarks}. 
The \textit{specific energy} observable associated to $\Phi\in\cB$ is 
\[
E(\Phi)\coloneq\sum_{X\ni0}\frac{\Phi(X)}{|X|},
\]
and for $\omega\in\cSI$ we set 
\[
\ce_\Phi(\omega)\coloneq\omega(E(\Phi)).
\]
To $\Phi\in\cB$ we associate the \textit{local Hamiltonians} 
\[
H_\Lambda(\Phi)\coloneq\sum_{X\subset\Lambda}\Phi(X).
\]
The following result was established in~\cite{Araki2003}, see in particular 
Section~12.
\begin{theorem}\label{thm:araki-moriya}
Let $\Phi\in\cB$. 
\benr
\item For any $\omega\in\cSI(\fA)$, 
\[
\ce_\Phi(\omega)=\Lim\frac{\omega(H_\Lambda(\Phi))}{|\Lambda|}.
\]
\item The limit 
\[
P(\Phi)=\Lim\tfrac1{|\Lambda|}\log\tr(\e^{-H_\Lambda(\Phi)})
\]
exists and is finite.
\item
\[
P(\Phi)=\sup_{\omega\in\cSI(\fA)}(s(\omega)-\ce_\Phi(\omega)).
\]
\item The set 
\[
\cSeq(\Phi)\coloneq\{\omega\in\cSI(\fA)\mid P(\Phi)=s(\omega)-\ce_\Phi(\omega)\}
\]
is non-empty.
\eenr
\end{theorem}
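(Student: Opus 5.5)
The statement to be proved is Theorem~\ref{thm:araki-moriya}, which the paper attributes to Araki--Moriya. I would nevertheless follow the standard route for quantum spin systems, adapted to the fermionic setting by working only with even interactions. Evenness is what makes the local Hamiltonians of disjoint regions commute, and it is what lets one use the product property of even states.

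\textbf{Part (i).} I would expand
\[
\omega(H_\Lambda(\Phi))=\sum_{X\subset\Lambda}\omega(\Phi(X)).
\]
Using translation covariance and translation invariance, each $X$ is assigned weight $1/|X|$ at each of its points, which gives
\[
\omega(H_\Lambda(\Phi))=\sum_{x\in\Lambda}\omega(\varphi^x(E(\Phi)))-R_\Lambda=|\Lambda|\,\ce_\Phi(\omega)-R_\Lambda .
\]
The remainder satisfies $|R_\Lambda|\le\sum_{x\in\Lambda}\sum_{X\ni x,\,X\not\subset\Lambda}\|\Phi(X)\|$. Condition \eqref{eq:sumab} makes this $o(|\Lambda|)$: split $x\in\Lambda$ into points at distance $\ge r$ from $\Lambda^c$ and boundary points, and use that the tail $\sum_{X\ni0,\,\mathrm{diam}X\ge r}\|\Phi(X)\|$ tends to $0$ as $r\to\infty$.

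\textbf{Part (ii).} I would first prove the limit for finite-range truncations $\Phi_r$, then pass to $\Phi$.
\begin{itemize}
\item For $\Phi_r$, partition a large cube into translates of a fixed cube $\Lambda_0$. Since $\Phi$ is even, local Hamiltonians in disjoint regions commute and $\tr$ factorizes over the graded tensor decomposition $\fA_{\Lambda_1\cup\Lambda_2}\simeq\fA_{\Lambda_1}\otimes\fA_{\Lambda_2}$ for even elements. The interaction across blocks has norm $O(|\partial|)$.
\item The Peierls--Bogoliubov bound $|\log\tr\e^{A}-\log\tr\e^{B}|\le\|A-B\|$ then gives an approximate subadditivity, hence convergence along cubes.
\item The same bound gives $|P_\Lambda(\Phi)-P_\Lambda(\Phi')|\le|\Lambda|\,\|\Phi-\Phi'\|$. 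So $\frac1{|\Lambda|}P_\Lambda$ is equicontinuous in $\Phi$, and the limit extends from the finite-range interactions, which are dense in $\cB$, to all of $\cB$.
\end{itemize}
Finiteness follows from $|P_\Lambda(\Phi)|\le|\Lambda|(\log2+\|\Phi\|)$.

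\textbf{Part (iii).} For the inequality $\ge$, apply the Gibbs variational principle for the finite-dimensional algebra $\fA_\Lambda$:
\[
\log\tr\e^{-H_\Lambda}\ge S(\omega_\Lambda)-\omega(H_\Lambda).
\]
Divide by $|\Lambda|$ and use (i) together with the existence of $s$.

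For the inequality $\le$, construct near-maximizers. Take the even local Gibbs state $\rho_{\Lambda_0}$ of $H_{\Lambda_0}$, form its product state over a tiling by translates of $\Lambda_0$, and average over the translations within $\Lambda_0$ to obtain a translation-invariant state $\omega$. Because the building block is even, the product is a well-defined state on the CAR algebra. By affinity of $s$, its entropy is $S(\rho_{\Lambda_0})/|\Lambda_0|$. By the boundary estimate from (i), its energy is $\rho_{\Lambda_0}(H_{\Lambda_0})/|\Lambda_0|+o(1)$. Hence
\[
s(\omega)-\ce_\Phi(\omega)\ge\tfrac{1}{|\Lambda_0|}P_{\Lambda_0}(\Phi)-o(1).
\]

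\textbf{Part (iv).} The map $\omega\mapsto s(\omega)-\ce_\Phi(\omega)$ is upper semicontinuous, since $s$ is and $\ce_\Phi$ is weak$^*$ continuous. It is defined on the weak$^*$ compact set $\cSI(\fA)$, so it attains its supremum, which by (iii) equals $P(\Phi)$.

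\textbf{Main obstacle.} I expect the hardest step to be the fermionic product construction in (iii), together with the factorization of traces. Here one must check that the graded structure causes no sign problems, and this is exactly where the evenness of $\Phi$ and of the block states is indispensable.
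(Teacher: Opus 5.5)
Your proposal is correct, but it does not follow the paper, which gives no proof of Theorem~\ref{thm:araki-moriya}. The paper imports the result from Araki--Moriya~\cite{Araki2003} (Section~12), where the formalism is developed for a class of even translation-covariant potentials strictly larger than $\cB$ (cf.\ Remark~1 in Section~\ref{sec:remarks}) and for general van Hove limits. You instead use the stronger summability~\eqref{eq:sumab} to run the classical spin-system argument, with evenness absorbing the graded structure:
\begin{itemize}
\item for (i), the boundary estimate;
\item for (ii), Peierls--Bogoliubov, equicontinuity in $\Phi$ and density of finite-range interactions;
\item for (iii), the finite-volume Gibbs variational principle, together with periodic products of even local Gibbs states; that principle is the same relative-entropy inequality~\eqref{lr-u}--\eqref{lr-u-2} the paper uses in the proof of Theorem~\ref{thm:main};
\item for (iv), compactness and upper semicontinuity.
\end{itemize}
This gives a self-contained proof that isolates exactly what is needed from the fermionic theory: existence and upper semicontinuity of $s$, and existence of infinite products of even states. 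The price is that it does not reach beyond $\cB$, which is exactly the extension the authors leave open.

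One step should be tightened. The paper asserts affinity of $s$ only on $\cSI$. The states $\omega_0\circ\varphi^y$ you average are invariant only under the tile lattice, so ``by affinity of $s$'' is not directly available. You do not need it. By concavity of the von Neumann entropy, $S(\omega_\Lambda)$ dominates the average of the entropies of the restrictions of the $\omega_0\circ\varphi^y$. Each of these is at least (number of tiles contained in $\Lambda$)$\times S(\rho_{\Lambda_0})$, for three reasons:
\begin{itemize}
\item the local density matrix of a product of even states is the product of commuting even density matrices;
\item the trace factorizes;
\item partially covered tiles contribute nonnegative entropy.
\end{itemize}
This gives $s(\omega)\ge S(\rho_{\Lambda_0})/|\Lambda_0|$, which is all your near-maximizer step requires.
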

$\ce_\Phi(\omega)$ is the \textit{specific energy} of the interaction $\Phi$
w.r.t.\;the state $\omega$, and $P(\Phi)$ is the \textit{pressure} of $\Phi$.
Part~\textbf{(iii)} is the \textit{Gibbs variational principle.} The elements of
$\cSeq(\Phi)$ are the \textit{equilibrium states} of $\Phi$.

Much more is known about the thermodynamic formalism of fermionic systems; 
Theorem~\ref{thm:araki-moriya} exposes the minimal list of results we need to 
proceed. 

An interaction $\Phi\in\cB$ generates a dynamics if, for all $A\in\fAloc$, the limit 
\[
\alpha_\Phi^t(A)\coloneq\Lim\e^{\i tH_\Lambda(\Phi)}A\e^{-\i tH_\Lambda(\Phi)}
\]
exists uniformly for $t$ in compact intervals. In this case, $\alpha_\Phi^t$
extends by continuity to $\fA$, and defines a $C^\ast$-dynamics
$\alpha_\Phi=\left(\alpha_\Phi^t\right)_{t\in\RR}$.
We denote by $\cSKMS(\Phi)$ the set of translation-invariant KMS states
for $\alpha_\Phi$ at inverse temperature $\beta=1$. It is a general fact that
\beq
\cSKMS(\Phi)\subset\cSeq(\Phi).
\label{pai}
\eeq
Under the current assumptions this result follows 
from~\cite[Proposition~6.2.17 and~6.2.41]{Bratteli1981} in the case of quantum 
spin systems,  and is implicit in~\cite{Araki2003} in the case of lattice
fermion systems. Since we will not make use of~\eqref{pai}, we do not provide detailed 
reference.

One of the main results of~\cite{Araki2003} is that equality holds in~\eqref{pai} 
under the standard conditions that ensure the existence of $\alpha_\Phi$. 
We proceed to describe them.

To $\Phi\in\cB$ we associate the $\ast$-derivation $\delta_\Phi$ 
acting on $\fAloc$ as  
\[
\delta_\Phi(A)\coloneq\sum_{X\cap\supp(A)\neq\emptyset}\i[\Phi(X),A].
\]
The proof of Proposition~6.2.3 in~\cite{Bratteli1981} gives:
\begin{theorem}\label{thm:br-dyn}
Suppose that $\Phi\in\cB$. Then the derivation $\delta_\Phi$ is norm-closable on $\fAloc$.
Its closure $\bar\delta_\Phi$ generates a $C^\ast$-dynamics on $\fA$ if and only if 
one of the following conditions is satisfied:
\benr
\item $\bar\delta_\Phi$ has a dense set of analytic elements.
\item For any $\kappa\in\RR\setminus\{0\}$, $\left(\Id+\kappa\bar\delta_\Phi\right)\big(\Dom(\bar\delta_\Phi)\big)=\fA$. 
\eenr
If either of these conditions holds, then $\Phi$ generates a dynamics and $\alpha_\Phi^t=\e^{t \bar\delta_\Phi}$. 
\end{theorem}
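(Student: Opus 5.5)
The plan is to reproduce the argument of \cite[Proposition~6.2.3]{Bratteli1981}. The fermionic setting enters only through two facts about the CAR algebra: the local Hamiltonians $H_\Lambda(\Phi)$ lie in $\fAloc$, and there is a faithful tracial state $\tr$ on $\fA$, namely the gauge-invariant quasi-free state with covariance $\tfrac12\one$. The steps are, in order: (a) $\delta_\Phi$ is well defined on $\fAloc$; (b) it is closable; (c) its closure generates a $C^\ast$-dynamics iff (i) or (ii) holds; (d) that dynamics is the limit of the local dynamics.

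\emph{Step (a).} For $A\in\fAloc$ we have
\[
\sum_{X\cap\supp(A)\neq\emptyset}\|[\Phi(X),A]\|\le 2\|A\|\,|\supp(A)|\,\|\Phi\|.
\]
So the series defining $\delta_\Phi(A)$ converges absolutely. Clearly $\delta_\Phi$ is a $\ast$-derivation on $\fAloc$, and $\fAloc$ contains $\one$.

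\emph{Step (b): closability.} Since $\tr$ is tracial, $\tr(\i[\Phi(X),A])=0$ for each term, and by absolute convergence $\tr\circ\delta_\Phi=0$. The Leibniz rule then gives the integration-by-parts identity
\[
\tr(B\,\delta_\Phi(A))=-\tr(\delta_\Phi(B)\,A),\qquad A,B\in\fAloc .
\]
Now let $A_n\to0$ and $\delta_\Phi(A_n)\to Y$. For every $B\in\fAloc$,
\[
\tr(BY)=\lim_n\tr(B\,\delta_\Phi(A_n))=-\lim_n\tr(\delta_\Phi(B)A_n)=0 .
\]
By density of $\fAloc$ this extends to all $B\in\fA$. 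Taking $B=Y^\ast$ gives $\tr(Y^\ast Y)=0$, hence $Y=0$ by faithfulness of $\tr$.

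\emph{Step (c): generation.} A $\ast$-derivation on a dense unital $\ast$-subalgebra of a $C^\ast$-algebra is automatically conservative: $\|(\Id+\kappa\delta)A\|\ge\|A\|$ for all real $\kappa$. This is the standard argument via a state attaining the norm of $A^\ast A$, and it passes to the closure $\bar\delta_\Phi$.
\begin{itemize}
\item By Lumer--Phillips, applied in both time directions, the closed operator $\bar\delta_\Phi$ generates a strongly continuous group of isometries iff the range condition (ii) holds.
\item The group consists of $\ast$-automorphisms, because its generator is a derivation; this is the standard fact that $\e^{t\bar\delta}$ is multiplicative when $\bar\delta$ satisfies the Leibniz rule on its domain.
\item Condition (i) is equivalent to (ii) by the analytic-vector criterion for closed conservative operators \cite[Section~3.1]{Bratteli1981}. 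In one direction, generators always have a dense set of analytic elements. In the other, a dense set of analytic elements forces the range condition.
\end{itemize}

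\emph{Step (d): identification with the local dynamics.} Write $\delta_\Lambda(A)=\i[H_\Lambda(\Phi),A]$, a bounded inner derivation with $\e^{t\delta_\Lambda}(A)=\e^{\i tH_\Lambda(\Phi)}A\e^{-\i tH_\Lambda(\Phi)}$. For $A\in\fAloc$ and $\Lambda\supset\supp(A)$,
\[
\|\delta_\Phi(A)-\delta_\Lambda(A)\|\le 2\|A\|\sum_{x\in\supp(A)}\ \sum_{X\ni x,\ X\not\subset\Lambda}\|\Phi(X)\|\longrightarrow0,
\]
by \eqref{eq:sumab}. This uses that only $X$ meeting $\supp(A)$ contribute to either derivation, since $\Phi(X)$ is even and commutes with $A$ when $X\cap\supp(A)=\emptyset$. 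Here the evenness of $\Phi$ is essential.

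Since $\fAloc$ is a core for $\bar\delta_\Phi$ and all the groups are isometric, the Trotter--Kato approximation theorem gives $\e^{t\delta_\Lambda}A\to\e^{t\bar\delta_\Phi}A$ uniformly for $t$ in compact intervals, first for $A\in\fAloc$ and then on $\fA$. Hence $\Phi$ generates a dynamics and $\alpha_\Phi^t=\e^{t\bar\delta_\Phi}$.

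The main obstacle is step (c), specifically the equivalence of (i) and (ii). The rest is routine once the trace is available for closability and evenness is used in the local approximation.
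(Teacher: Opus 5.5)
Your proposal follows the same route as the paper, which simply defers to the proof of \cite[Proposition~6.2.3]{Bratteli1981}. That route is: closability of $\delta_\Phi$ on $\fAloc$, the semigroup-theoretic equivalence of generation with (i) and (ii), and identification of $\e^{t\bar\delta_\Phi}$ with the limit of the local dynamics via Trotter--Kato on the core $\fAloc$. Your closability argument via the faithful trace state is valid, and Steps (c) and (d) are the standard ones.

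One claim in Step (c) is false as stated: a $\ast$-derivation on an arbitrary dense unital $\ast$-subalgebra need not be conservative. Here is a counterexample. In $C([0,1])$, take the unital $\ast$-subalgebra generated by $x$ and $\e^{x}$. It is a polynomial ring in two variables, since $\e^{x}$ is transcendental. On it, $\partial/\partial y$ is a $\ast$-derivation, but it is not closable: the elements $\e^{x}-T_n(x)$, with $T_n$ the Taylor polynomials, tend to $0$ while their images are identically $1$. A densely defined conservative operator is always closable, so this derivation is not conservative.

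The argument you allude to, using a state $\omega$ with $\omega(A^\ast A)=\|A\|^2$, needs $B=(\|A\|^2-A^\ast A)^{1/2}$ to lie in the domain. That is what gives $\omega(\delta(A^\ast A))=-\omega(\delta(B)B+B\delta(B))=0$ by Cauchy--Schwarz. In the present setting this holds because every element of $\fAloc$ lies in a finite-dimensional $C^\ast$-subalgebra $\fA_X$, which is closed under continuous functional calculus. You should state this as the reason. As a by-product, it gives closability directly, so the trace argument of Step (b) becomes optional.

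Relatedly, the Leibniz rule in Step (a) already uses evenness of $\Phi$. Evenness is what lets you compute $\delta_\Phi(A)$ as $\sum_{X\cap\Lambda\neq\emptyset}\i[\Phi(X),A]$ for any finite $\Lambda\supset\supp(A)$, so it is needed there, not only in Step (d).
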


\begin{remark}\label{rem:notransinv} 
Translation covariance is not required in~\cite[Proposition~6.2.3]{Bratteli1981}; hence, Theorem~\ref{thm:br-dyn} holds as stated for any even interaction $\Phi$ 
satisfying 
$$
\sum_{X\ni x}\|\Phi(X)\|<\infty
$$
for all $x\in\ZZ^d$. 
\end{remark}

\begin{remark} Proposition~6.2.3 in~\cite{Bratteli1981} is proven for quantum spin systems. 
The argument extends directly to lattice fermions.
\end{remark}

The following result holds, see~\cite[Theorems~A and~B]{Araki2003} and Section~\ref{sec:remarks}:
\begin{theorem}\label{eqkms}
Let $\Phi\in\cB$ and suppose that either Condition~\textbf{(i)} or~\textbf{(ii)} of 
Theorem~\ref{thm:br-dyn} holds. Then 
\[
\cSKMS(\Phi)=\cSeq(\Phi).
\]
\end{theorem}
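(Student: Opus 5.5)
The plan is not to reprove the Araki--Moriya theory. Instead I would check that every object in the statement coincides with the corresponding object in~\cite{Araki2003}, and then read off the equality $\cSKMS(\Phi)=\cSeq(\Phi)$ from their Theorem~A (KMS $\Rightarrow$ variational principle) and Theorem~B (variational principle $\Rightarrow$ KMS). So the proof is a dictionary between the two frameworks, in three steps.

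\textbf{Step 1: $\Phi$ is an admissible potential.} Let $\Phi\in\cB$. Araki--Moriya work with even, translation covariant potentials under a summability condition weaker than~\eqref{eq:sumab}. Our norm condition $\sum_{X\ni0}\|\Phi(X)\|<\infty$ implies theirs. I would also check two consequences of evenness.
\begin{itemize}
\item If $A\in\fAloc$ is even or odd and $X\cap\supp(A)=\emptyset$, then $[\Phi(X),A]=0$. Hence $\delta_\Phi(A)=\Lim\i[H_\Lambda(\Phi),A]$ on $\fAloc$, with the series converging absolutely by~\eqref{eq:sumab}.
\item Consequently our $\delta_\Phi$ agrees on $\fAloc$ with the derivation Araki--Moriya attach to $\Phi$.
\end{itemize}
One can also pass to their \emph{standard potential} attached to the same $\delta_\Phi$. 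This changes neither the local Hamiltonians up to boundary terms of order $o(|\Lambda|)$, nor the quantities $P(\Phi)$ and $\ce_\Phi(\omega)$ of Theorem~\ref{thm:araki-moriya}. Therefore $\cSeq(\Phi)$ is literally their set of translation-invariant states satisfying the variational principle.

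\textbf{Step 2: the dynamics hypothesis.} Araki--Moriya assume that $\delta_\Phi$ is closable and that its closure generates a $C^\ast$-dynamics with $\fAloc$ as a core. By Theorem~\ref{thm:br-dyn}, condition~\textbf{(i)} or~\textbf{(ii)} is exactly equivalent to this: the closure $\bar\delta_\Phi$ of $\delta_\Phi|_{\fAloc}$ generates $\alpha_\Phi^t=\e^{t\bar\delta_\Phi}$, and $\fAloc$ is a core by construction. I would also check the conventions. Our $\alpha_\Phi^t(A)=\Lim\e^{\i tH_\Lambda}A\e^{-\i tH_\Lambda}$ and the KMS condition at $\beta=1$ match their sign conventions, with $\e^{-H_\Lambda}$ as the local Gibbs weight. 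If instead their convention corresponds to $\beta=-1$, I would replace $t$ by $-t$ before applying their theorems.

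\textbf{Step 3: conclusion.} Once the hypotheses are matched, Theorem~A of~\cite{Araki2003} gives $\cSKMS(\Phi)\subset\cSeq(\Phi)$, which is~\eqref{pai}. Theorem~B gives the reverse inclusion: a translation-invariant state satisfying the variational principle is $\alpha_\Phi$-KMS at $\beta=1$. Together they yield the equality.

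I expect Step~1 to be the main obstacle, specifically identifying the potential class and the energy functional. Araki--Moriya phrase things through the standard potential, the conditional expectations onto $\fA_\Lambda$, and a possibly different boundary term in the finite-volume energy. I would first show that for $\Phi\in\cB$ the differences between $H_\Lambda(\Phi)$ and their finite-volume Hamiltonians are $o(|\Lambda|)$ in norm. This is a standard $\varepsilon/3$ argument using~\eqref{eq:sumab} and translation covariance. It guarantees that pressure, specific energy, and hence $\cSeq(\Phi)$ coincide. The remaining identifications are bookkeeping.
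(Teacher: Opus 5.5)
Your proposal is correct and follows the paper's route. The paper gives no argument here beyond citing Theorems~A and~B of~\cite{Araki2003}, together with Remark~1 of Section~\ref{sec:remarks} noting that $\cB$ lies inside their broader class of potentials, and your dictionary just makes those hypotheses explicit.

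There is one small slip in your optional detour through the standard potential. Standardization also removes the scalar parts $2^{-|X|}\tr(\Phi(X))\one$; here these contribute $\tfrac12 h(0,0)$ per site from the diagonal terms. So the local Hamiltonians differ by a scalar of order $|\Lambda|$, not only by $o(|\Lambda|)$ boundary terms. As a result, $P(\Phi)$ and $\ce_\Phi(\omega)$ are shifted by the same constant rather than being equal. This still leaves $\cSeq(\Phi)$ unchanged, so your conclusion stands.
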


\subsection{Quadratic Interactions}

The following result will be the key input in our proof of Theorem~\ref{thm:main}. 

\bep\label{prop:RegC}
Let $C$ be a regular covariance operator, $h$ the associated one-particle Hamiltonian, and 
$\Phi$ the corresponding interaction.
\benr
\item $\Phi\in\cB$ and Condition~\textup{\textbf{(i)}} of Theorem~\ref{thm:br-dyn} hold.
\item The $C^\ast$-dynamics $\alpha_\Phi$ coincides with the group of Bogoliubov automorphisms
generated by the Hamiltonian $h$.
\item
$$
\cSKMS(\Phi)=\cSeq(\Phi)=\{\omega_C\}.
$$
\item For any $\omega\in\cSI(C)$,
$$
e_\Phi(\omega)=\int_{\TT^d}\frac{\hat h(k)}{1+\e^{\hat h(k)}}\frac{\d k}{(2\pi)^d}.
$$
\eenr
\eep

\proof \textbf{(i)} Since $\hat h\in\cW$, the interaction~\eqref{eq:Phidef} satisfies
$$
\|\Phi\|=\sum_{x\in\ZZ^d}\|\Phi(\{0,x\})\|\le\sum_{x\in\ZZ^d}|h(x,0)|\eqcolon\VERT h\VERT<\infty,
$$
and hence $\Phi\in\cB$. Moreover, if $A\in\fAloc$ with $\supp(A)=X_1$, one has
$$
\delta_\Phi(A)=\sum_{X\cap X_1\neq\emptyset}\i[\Phi(X),A]
=\sum_{x_1\in\ZZ^d}\sum_{y_1\in X_1}\i[\Phi(\{x_1,y_1\}),A]
=\sum_{x_1\in\ZZ^d}A^{(1)}(x_1)
$$
where $A^{(1)}(x_1)\in\fAloc$ and $\supp(A^{(1)}(x_1))=X_1\cup\{x_1\}\eqcolon X_2$. 
Iterating, we get
$$
\delta_\Phi^n(A)=\sum_{x_1,\ldots,x_n\in\ZZ^d}
\sum_{y_n\in X_n}\cdots\sum_{y_1\in X_1}\i[\Phi(\{x_n,y_n\}),\i[\cdots,\i[\Phi(\{x_1,y_1\}),A]\cdots]],
$$
where $X_{k+1}\coloneq X_1\cup\{x_1,\ldots,x_k\}$.
Since $\|\i[\Phi(\{x,y\}),B]\|\le2|h(x,y)|\|B\|$, and for any $Y\in\cF$,
\beq
\sum_{x\in\ZZ^d}\sum_{y\in Y}|h(x,y)|
\le\sum_{x\in\ZZ^d}\sum_{y\in Y}|h(x-y,0)|=|Y|\VERT h\VERT,
\label{eq:hsmall}
\eeq
we get the estimate
$$
\|\delta_\Phi^n(A)\|\le\sum_{x_1,\ldots,x_n\in\ZZ^d}
\sum_{y_n\in X_n}\cdots\sum_{y_1\in X_1}
2^n\left(\prod_{k=1}^n|h(x_k-y_k,0)|\right)\|A\|
\le2^n\frac{(|X_1|+n-1)!}{(|X_1|-1)!}\VERT h\VERT^n\|A\|.
$$
It follows that, for small enough $t\in\RR$,
\beq
\sum_{n\ge0}\frac{t^n}{n!}\|\delta_\Phi^n(A)\|
\le\sum_{n\ge0}\binom{|X_1|+n-1}{|X_1|-1}\left(2t\VERT h\VERT\right)^n\|A\|
\le2^{|X_1|-1}\|A\|\sum_{n\ge0}\left(4t\VERT h\VERT\right)^n<\infty,
\label{eq:analytic}
\eeq
and we conclude that Condition~\textbf{(i)} of Theorem~\ref{thm:br-dyn} 
is satisfied.

\medskip\noindent\textbf{(ii)} By Part~\textbf{(i)} and Theorem~\ref{thm:br-dyn}, 
the interaction $\Phi$ generates a dynamics $\alpha_\Phi$. As already noticed in 
the introduction, the inner $C^\ast$-dynamics generated by the local Hamiltonian 
$H_\Lambda(\Phi)$ coincides with the group of Bogoliubov automorphisms associated 
to the one-particle Hamiltonian $h_\Lambda=\un_\Lambda h\un_\Lambda$,
\beq
\e^{\i tH_\Lambda(\Phi)}A\e^{-\i tH_\Lambda(\Phi)}=\tau_{h_\Lambda}^t(A)
\label{eq:AisB}
\eeq
for any $A\in\fA$ and $t\in\RR$. Since $\Lim h_\Lambda=h$ in the strong topology
of $\cB(\fh)$, one has $\Lim\e^{\i th_\Lambda}=\e^{\i th}$ in the same topology,
for $t\in\RR$. We conclude that for any integer $n$ and any $f_1,\ldots,f_n\in\fh$,
and any $t\in\RR$,
\begin{align*}
\Lim\tau_{h_\Lambda}^t(a^\#(f_1)\cdots a^\#(f_n))
&=\Lim a^\#(\e^{\i th_\Lambda}f_1)\cdots a^\#(\e^{\i th_\Lambda}f_n)\\[4pt]
&=a^\#(\e^{\i th}f_1)\cdots a^\#(\e^{\i th}f_n)=\tau_h^t(a^\#(f_1)\cdots a^\#(f_n)).
\end{align*}
Since elements of the form $a^\#(f_1)\cdots a^\#(f_n)$ are total in $\fA$,
it follows from Relation~\eqref{eq:AisB} that $\tau_h=\alpha_\Phi$.

\medskip\noindent\textbf{(iii)} By Part~\textbf{(i)} and Theorem~\ref{eqkms}, one has 
$\cSeq(\Phi)=\cSKMS(\Phi)$, while Part~\textbf{(ii)} and Theorem~\ref{thm:kms-qf}
give $\cSKMS(\Phi)=\{\omega_C\}$.

\medskip\noindent\textbf{(iv)} By Part~\textbf{(ii)}, for any $\omega\in\cSI(C)$ 
one has, taking Theorem~\ref{thm:kms-qf} into account,
$$
\omega(H_\Lambda(\Phi))=\tr(Ch_\Lambda)=\sum_{x,y\in\Lambda}C(x,y)h(y,x)
=\sum_{x,y\in\Lambda}\int_{\TT^d\times\TT^d}
\frac{\hat h(\eta)}{1+\e^{\hat h(\xi)}}\e^{\i(\xi-\eta)(x-y)}\frac{\d\xi\d\eta}{(2\pi)^{2d}}.
$$
Since, with $\Lambda=\llbracket-L,L\rrbracket^d$,
$$
\Lim\frac1{|\Lambda|}\sum_{x,y\in\Lambda}\e^{\i(\xi-\eta)(x-y)}
=\lim_{L\to\infty}\prod_{j=1}^d\left(\frac1{2L+1}
\sum_{l\in\llbracket-L,L\rrbracket}\e^{\i (\xi_j-\eta_j)l}\right)=(2\pi)^d\delta(\xi-\eta)
$$
in distributional sense, and since $\widehat h$ and $(1+\e^{\hat h})^{-1}\in\cW\subset C(\TT^d)$, 
the result follows.
\qed

 \section{Proof of Theorem \ref{thm:main}}

Define a density matrix in $\fA_\Lambda$ by
\beq
\nu_\Lambda\coloneq\e^{-H_\Lambda(\Phi)-P_\Lambda(\Phi)},
\label{eq:Gibbs}
\eeq
and observe that the relative entropy of the pair $(\omega_\Lambda,\nu_{\Lambda})$ is 
\beq  
\Ent(\omega_\Lambda|\nu_\Lambda)=\tr(\omega_\Lambda(\log\omega_\Lambda-\log\nu_\Lambda))\geq  0,
\label{lr-u}
\eeq
with equality iff $\omega_\Lambda=\nu_\Lambda$.  Since
\beq
\tr(\omega_\Lambda\log\nu_\Lambda)=-\omega_\Lambda(H_\Lambda(\Phi))-P_\Lambda(\Phi),
\label{lr-u-1}
\eeq
dividing~\eqref{lr-u} by $|\Lambda|$ and taking the thermodynamic limit, we deduce 
from~\eqref{lr-u-1} that 
\beq
\ce_\Phi(\omega)+P(\Phi)\geq s(\omega).
\label{lr-u-2}
\eeq
Since $\ce_{\Phi}(\omega)=\ce_{\Phi}(\omega_C)$, and $\omega_C\in\cSeq(\Phi)$, it follows 
from~\eqref{lr-u-2} that 
\[
s(\omega_C)\geq s(\omega).
\]

This proves the Lanford--Robinson Theorem \ref{thm:lr} under our assumptions on $C$, but 
we get more. If $s(\omega)=s(\omega_C)$, then $e_{\Phi}(\omega)=e_{\Phi}(\omega_C)$ implies 
\[ s(\omega)=e_\Phi(\omega) + P(\Phi),\]
and so $\omega\in {\cal S}_\eq(\Phi)$. It then follows from Proposition~\ref{prop:RegC}\textbf{(iii)} that $\omega=\omega_C$, and Theorem \ref{thm:main} is proven.

\subsection{Remarks}
\label{sec:remarks}
 
\noindent\textbf{1.} As mentioned above, the Araki--Moriya thermodynamic formalism
applies to a class of interactions more general than those satisfying
\eqref{eq:sumab}. It is, however, unclear to what extent these broader
assumptions on $\Phi$ would allow one to relax our regularity
condition on $C$. It seems likely that any such extension would require a different technical argument. We plan to address this issue elsewhere.

\medskip\noindent\textbf{2.} The proof of Theorem~\ref{thm:main} is perhaps
deceptively simple; it ultimately relies on deep structural results from the
thermodynamic formalism of lattice fermions. It would be interesting to know
whether a more direct and self-contained proof of this uniqueness result can be
found.

\medskip\noindent\textbf{3.} Our regularity assumption excludes
singular covariance operators, such as those with discontinuous symbol $\widehat{C}$. Although
the entropy maximization principle itself is expected to hold more generally,
extending the uniqueness result to such singular quasi-free states would likely
require a different strategy.

On the other hand, from the thermodynamic perspective, our regularity assumption
is very natural, since it is equivalent to requiring that the associated
interaction $\Phi$ belongs to the class of so-called \textit{small} (or
\textit{physical}) interactions~\cite{Israel1979,Bratteli1981}. The well-known
pathologies that arise with interactions outside this class are discussed 
in~\cite{Israel1979,Bratteli1981}.

\section{Proof of Theorem \ref{thm:weak-gibbs}}
\label{sec:weak-gibbs}

Introducing the decoupled one-particle Hamiltonian
\[ 
\hdec\coloneq h_\Lambda\oplus h_{\Lambda^c},
\]
we note that
$$
\hdec(x,y)=\begin{cases*}
h(x,y)&if $(x,y)\in(\Lambda\times\Lambda)\cup(\Lambda^c\times\Lambda^c)$;\\[4pt]
0&otherwise,
\end{cases*}
$$
and hence, invoking the estimate~\eqref{eq:hsmall}, we deduce that for any $Y\in\cF$,
\beq
\sum_{x\in\ZZ^d}\sum_{y\in Y}|\hdec(x,y)|\le\sum_{x\in\ZZ^d}\sum_{y\in Y}|h(x,y)|=|Y|\VERT h\VERT.
\label{eq:hlamdec}
\eeq
Denote by $\Phidec$ the interaction of the form~\eqref{eq:Phidef}, with $h$
replaced by $\hdec$, \ie
$$
\Phidec(X)=\begin{cases*}
\Phi(X)&if $X\subset\Lambda$ or $X\subset\Lambda^c;$\\[4pt]
0&otherwise.
\end{cases*}
$$
Clearly, $\Phidec$ fails to be translation covariant. However, it is an even interaction 
such that
$$
\sum_{X\ni x}\|\Phidec(X)\|\le\sum_{X\ni x}\|\Phi(X)\|=\|\Phi\|<\infty
$$
for all $x\in\ZZ^d$. Thus, by Remark~\ref{rem:notransinv}, Theorem~\ref{thm:br-dyn} applies. 
Furthermore, by Inequality~\eqref{eq:hlamdec}, the estimate~\eqref{eq:analytic} holds with
$\delta_\Phi$ replaced by $\delta_{\Phidec}$.
We infer that any element of $\fAloc$ is analytic for $\bar\delta_{\Phidec}$, and that
the latter generates a dynamics $\alpha_{\Phidec}$ such that $\alpha_{\Phidec}=\tau_{\hdec}$.
In particular
$$
\alpha_{\Phidec}^t(A)=\e^{\i tH_\Lambda(\Phi)}A\e^{-\i tH_\Lambda(\Phi)},
$$
for any $A\in\fA_\Lambda$, from which it follows that the restriction to $\fA_\Lambda$
of a KMS state for  $\alpha_{\Phidec}$ at $\beta=1$ has the density matrix~\eqref{eq:Gibbs}.

The dynamics $\alpha_\Phi$ and $\alpha_{\Phidec}$ are related by the perturbation 
\[
W_\Lambda(\Phi)=\sum_{\twol{X\cap\Lambda\neq\emptyset}{X\cap\Lambda^c\neq\emptyset}}\Phi(X).
\]
It is well-known that $\Phi\in\cB$ ensures that the \textit{surface energies} $W_\Lambda(\Phi)$ 
are well-defined self-adjoint elements of $\fA$ satisfying  
$$
\Lim\tfrac1{|\Lambda|}\|W_\Lambda(\Phi)\|=0,
$$
see, for example, the proof of~\cite[Theorem 6.2.42]{Bratteli1981}. By construction
\[
W_\Lambda(\Phi)=\sum_{x,y\in\ZZ^d}w_\Lambda(x,y)a_x^\ast a_y,
\]
where $w_\Lambda\coloneq h-\hdec$. Setting $\Lambda_L\coloneq\{x\in\ZZ^d\mid\max_i|x_i|\le L\}$, 
we can write, for $0<R<L$,
\begin{align*}
\frac12\sum_{x,y\in\ZZ^d}|w_{\Lambda_L}(x,y)|
&\le\sum_{(x,y)\in\Lambda_L\times\Lambda_L^c}|h(x,y)|\\[4pt]
&\le\sum_{(x,y)\in(\Lambda_L\setminus\Lambda_{L-R})\times\ZZ^d}|h(x-y,0)|
+\sum_{(x,y)\in\Lambda_{L-R}\times\Lambda_L^c}|h(x-y,0)|\\[4pt]
&\le|\Lambda_L\setminus\Lambda_{L-R}|\VERT h\VERT
+|\Lambda_{L-R}|\sum_{|z|\ge R}|h(z,0)|.
\end{align*}
Dividing by $|\Lambda_L|$, and noticing that $|\Lambda_L\setminus\Lambda_{L-R}|/|\Lambda_L|\to0$
and $|\Lambda_{L-R}|/|\Lambda_L|\to 1$ as $L\to\infty$, yields
$$
\limsup_\Lambda\frac1{2|\Lambda|}\sum_{x,y\in\ZZ^d}|w_{\Lambda}(x,y)|\le\sum_{|z|\ge R}|h(z,0)|.
$$
Since the right-hand side of this inequality vanishes in the limit $R\to\infty$, we 
conclude that 
\beq
\Lim\frac1{|\Lambda|}\sum_{x,y\in\ZZ^d}|w_\Lambda(x,y)|=0.
\label{nir-0}
\eeq
The one-particle Hamiltonians $h$ and $\hdec$ being bounded, the $\fA$-valued maps 
\begin{align*}
\RR\ni t\mapsto\alpha_\Phi^t(W_\Lambda(\Phi))
&=\sum_{x,y\in\ZZ^d}w_\Lambda(x,y)a^\ast(\e^{\i th}\delta_x)a(\e^{\i th}\delta_y),\\[4pt]
\RR\ni t\mapsto\alpha_\Phidec^t(W_\Lambda(\Phi))
&=\sum_{x,y\in\ZZ^d}w_\Lambda(x,y)a^\ast(\e^{\i t\hdec}\delta_x)a(\e^{\i t\hdec}\delta_y),
\end{align*}
have entire analytic extensions such that
\begin{align*}
\|\alpha_\Phi^z(W_\Lambda(\Phi))\|&\leq\e^{2|\Im z|\|h\|}\sum_{x,y\in\ZZ^d}|w_\Lambda(x,y)|,\\[4pt]
\|\alpha_\Phidec^z(W_\Lambda(\Phi))\|&\leq\e^{2|\Im z|\|\hdec\|}\sum_{x,y\in\ZZ^d}|w_\Lambda(x,y)|.
\end{align*}
Relation~\eqref{nir-0} gives  that for all $z\in\CC$, 
\beq
\Lim\tfrac1{|\Lambda|}\|\alpha_\Phi^z(W_\Lambda(\Phi))\|
=\Lim\tfrac1{|\Lambda|}\|\alpha_\Phidec^z(W_{\Lambda}(\Phi))\|=0.
\label{eq:magic}
\eeq
Invoking~\cite[Theorem~3.1]{Jaksic2023a} yields
$$
\e^{-\|W_\Lambda(\Phi)\|-\|\alpha_\Phi^{\i/2}(W_\Lambda(\Phi))\|}\omega_{C,-W_\Lambda(\Phi)}
\le\omega_C
\le\e^{\|W_\Lambda(\Phi)\|+\|\alpha_\Phidec^{\i/2}(W_\Lambda(\Phi))\|}\omega_{C,-W_\Lambda(\Phi)}
$$
where $\omega_{C,-W_\Lambda(\Phi)}$ is a KMS state for $\alpha_\Phidec$ at $\beta=1$. 
Restricting these inequalities to $\fA_\Lambda$, using the previously mentioned fact that
the restriction of $\omega_{C,-W_\Lambda(\Phi)}$ is the state~\eqref{eq:Gibbs}, the result 
now follows from~\eqref{eq:magic}.
\qed

\printbibliography

\end{document}